\documentclass[sigplan,10pt]{acmart}
\renewcommand\footnotetextcopyrightpermission[1]{}

\AtBeginDocument{}

\usepackage{amsmath}
\usepackage{filecontents}
\usepackage{times}  
\usepackage{graphicx}
\usepackage{xspace}
\usepackage{xcolor}
\usepackage[T1]{fontenc}
\usepackage{caption}
\usepackage{subcaption}
\usepackage{verbatim}
\usepackage{url}
\usepackage{enumitem}
\usepackage{listings}
\usepackage{multirow}
\usepackage{pifont}
\usepackage{mathtools}
\usepackage{fancyhdr}
\usepackage{xurl}
\usepackage{textcomp}
\usepackage{pifont}
\usepackage{afterpage}
\usepackage[export]{adjustbox}
\usepackage{tabularx}
\usepackage{algorithm}
\usepackage{algpseudocode}
\usepackage{tikz}
\usepackage{placeins}
\usetikzlibrary{arrows.meta, positioning, shapes.geometric, calc}

\usepackage[capitalise,noabbrev]{cleveref}
\crefformat{section}{\S#2#1#3}
\crefname{appendix}{Appendix}{Appendices}

\usepackage{amsmath, amsthm}
\newtheorem{definition}{Definition}
\newtheorem{theorem}{Theorem}

\newcommand{\parab}[1]{\noindent\textbf{#1}}
\newcommand{\sysname}{InstantInfer\xspace}

\definecolor{mypurple}{RGB}{102,0,153}

\begin{document}
\pagestyle{plain}

\title{\sysname: Enabling Fast LLM Cold Start with Communicating Finite Automata}

\settopmatter{authorsperrow=4}

\author{Yitao Yuan}
\authornote{This work was done during an internship at ScitiX AI.}
\affiliation{\institution{Peking University}
  \city{}
  \country{}}
\affiliation{\institution{ScitiX AI}
  \city{}
  \country{}}

\author{Yongchao He}
\authornote{Corresponding authors.}
\affiliation{\institution{ScitiX AI}
  \city{}
  \country{}}

\author{Shaoke Fang}
\affiliation{\institution{Peking University}
  \city{}
  \country{}}

\author{Wenfei Wu}
\authornotemark[2]
\affiliation{\institution{Peking University}
  \city{}
  \country{}}

\renewcommand{\shortauthors}{Yuan et al.}

\begin{abstract}

Cold starts in large language model (LLM) inference services significantly affect user experience, yet they remain inefficient due to sequential initialization and a massive number of fine-grained I/O requests issued by complex software components. Although refactoring the program can yield advantages such as concurrent execution and I/O merging, this approach is error-prone and carries correctness risks when dealing with massive, heterogeneous components. We propose the Communicating Finite Automata (CFA) abstraction to systematically analyze cross-component optimization opportunities, and design a programming framework to enable CFA-based component program refactoring. This framework preserves the original sequential program structure while enabling safe concurrent component execution. We prove the correctness of the program refactoring. We apply the CFA abstraction and framework to refactor process tree creation, tensor loading, and model switching in vLLM, forming a new cold-start system named \sysname. Extensive experiments demonstrate that \sysname substantially accelerates LLM cold starts (achieving up to $7.2\times$ speedup) and exhibits robustness across diverse GPUs, workloads, and scales.

\end{abstract}

\maketitle

\section{Introduction}\label{sec:introduction}

Large language model (LLM) inference has become a critical service primitive for various applications~\cite{github-copilot,microsoft-365-copilot,claude-code,codex,openclaw}.
Serving these interactive applications at scale requires expensive GPU resources~\cite{fu2024serverlessllm}, while providers must support diverse models and bursty user demand.
To improve GPU utilization, inference service providers share GPU clusters across many models, keep only a subset active, and activate
other models on demand, as in typical serverless or elastic serving scenarios~\cite{fu2024serverlessllm,liu2025pipeboost,zhang2025blitzscale}.
This on-demand activation introduces LLM cold start: before serving requests, the system must load the model from disk to the GPU and establish a ready state for inference.
Existing works~\cite{fu2024serverlessllm,zeng2025medusa,liu2025pipeboost,zhang2025blitzscale} and our measurements show that cold-start latency dominates the time to first token (TTFT): the startup time is around 50s for <10B models, around 160s for \textasciitilde70B models, and around 300s for >400B models, accounting for 99.6\%--99.9\% of the total TTFT.
Although warm starts can be enabled by pre-placing models in memory, cold starts remain necessary during bursty access to massive long-tail models or unpredictable crash recovery, still degrading user experience (\cref{ssec:problems}).

The entire cold-start process runs in multiple stages, with each stage initializing massive, heterogeneous, and possibly hierarchical components.
Existing works on LLM cold starts mostly accelerate a particular stage of initialization, such as model loading~\cite{fu2024serverlessllm,lou2025hydraserveminimizingcoldstart,yu2025lambda,liu2025pipeboost,sui2024pre,yoshimura2025speeding,runai-model-streamer} or serving-readiness optimization~\cite{zeng2025medusa,liu2026foundrytemplatebasedcudagraph}, usually through techniques such as caching~\cite{zhang2025blitzscale,zhu2025tangram,sui2024pre} and state reuse~\cite{zhu2025tangram,zeng2025medusa}.
In this work, we do not optimize a single stage or technique in isolation, but focus on the dependency structure of the initialization process itself and use a unified abstraction to reveal latent optimization opportunities through cross-component refactoring.

Having observed two significant overheads during LLM cold start, we reveal opportunities to perform \textbf{\textit{cross-component optimization}} to accelerate overall performance.
First, for ease of programming and debugging, numerous components are typically executed sequentially along the program control flow, which forces unnecessary sequential dependencies and waiting.
Second, component operations originate from high-level algorithms, but their granularity (e.g., tensor size) mismatches the efficient operational granularity of the underlying hardware (e.g., I/O sizes), leading to high overhead or low hardware utilization (\cref{ssec:analysis}).

Our intuition is to \textbf{refactor the startup program} to exploit two opportunities: overlapping execution times across components and merging fine-grained I/O requests to improve hardware throughput, and ultimately reduce the overall initialization time.
However, the LLM startup program exhibits a complex structure, which poses significant challenges for program refactoring.
The cold start spans multiple stages, including process initialization, model initialization, and optimization; across these stages, there are numerous and heterogeneous components, organized in a hierarchy or sequence (e.g., process tree and model tensors).
Directly rewriting the existing code of components is not only tedious and labor-intensive, but worse yet, it inherently risks concurrency hazards.
To enable low-effort, correct, and safe refactoring of cold-start programs, we overcome three challenges.

First, existing program components are viewed as ``monolithic'' entities, discouraging fine-grained internal state analysis.
The program's control flow runs the monolithic components sequentially, so directly launching them in parallel could lead to race conditions and hazards.
We propose \textbf{a \textit{Communicating Finite Automata (CFA)} abstraction} to systematically explore the optimization opportunities.
Within a component, execution progress is described by monotonic state transitions; across components, logical dependencies are described by component-state dependencies.
The CFA abstraction enables direct application of the two optimizations: independent state transitions across components can safely run concurrently, and many fine-grained FAs/state transitions can be merged into a coarser FA/transition to better match the hardware data operation granularity (\cref{ssec:model}).
 
Second, an appropriate programming framework is needed to reduce the development effort of program refactoring.
Components exist across multiple stages and levels in the current startup program; modifying the heterogeneous components is tedious and error-prone, and managing all components' FAs in a single space would lead to state explosion.
We propose a \textbf{{CFA-based programming framework}} with unified declarative interfaces for all components.
The programming framework inserts statements into the component program to declare state transitions and state dependencies, requiring minimal changes to the original program.
The runtime performs dependency checking and blocks or wakes components accordingly.
The framework also supports managing related FAs in an isolated namespace, avoiding excessive inter-FA messaging (\cref{ssec:programming}).

Third, the program refactoring must guarantee correctness.
Overlapping component executions introduce concurrency, which could lead to race conditions and hazards on shared data or states.
We provide a \textbf{rigorous proof} that sequential monolithic component execution is \textbf{equivalent} to CFA-refactored execution under the declared state dependencies.
We also analyze data safety in each case where we use CFA to refactor the LLM startup program (\cref{ssec:correctness}).

\begin{figure}[t]
    \centering
    \includegraphics[width=\linewidth]{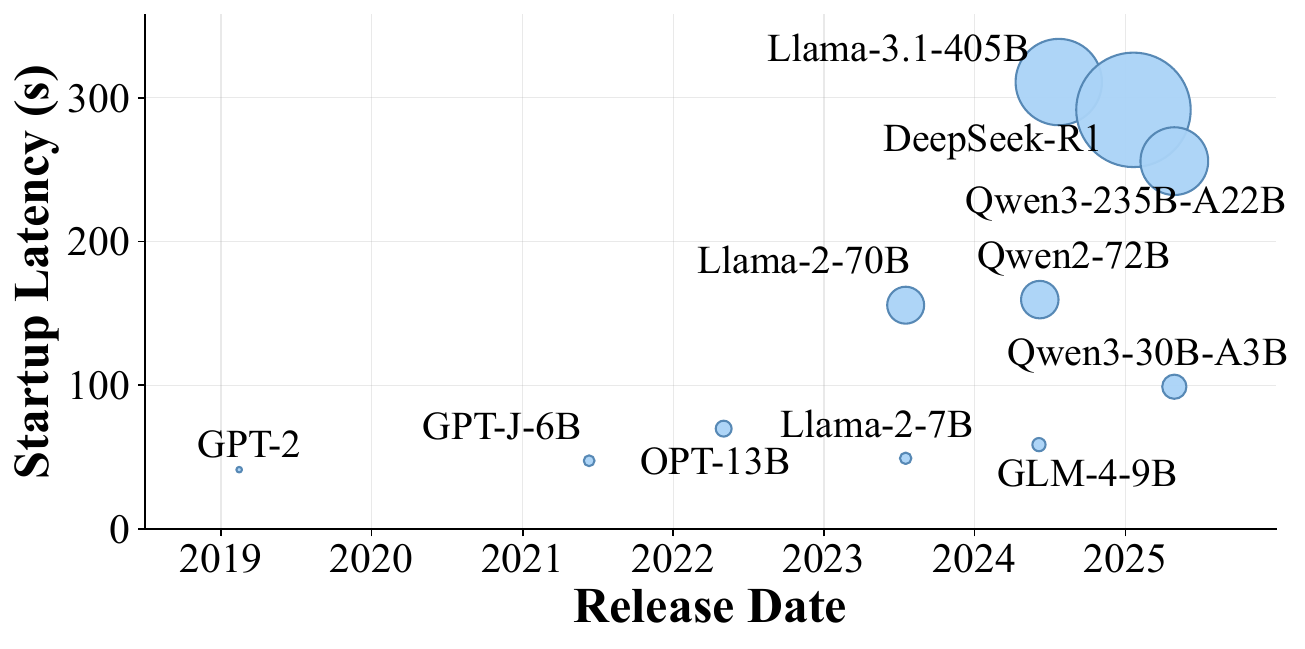}
    \caption{Startup latency vs model size. The circle area is proportional to the model parameter count.}
    \label{fig:startup_latency_vs_release}
\end{figure}

We instantiate the CFA abstraction and programming framework in three representative cold-start cases (\cref{sec:cfa-for-llm} and \cref{sec:implementation}): process-tree materialization, tensor loading, and model switching (cold start with replacement).
The related implementation has been integrated into vLLM, forming a new LLM cold-start system \sysname.
We run extensive experiments and show that \sysname achieves up to 7.2$\times$ lower cold-start TTFT, up to 32.3$\times$ faster model loading, and up to 11.8$\times$ lower service stall during model switching across diverse hardware and models (\cref{sec:evaluation}).

In summary, this paper makes the following contributions:
\begin{itemize}[leftmargin=*]
    \item We identify cross-component optimization opportunities in LLM cold start, including overlapping component execution and merging fine-grained I/O operations.
    \item We propose the CFA abstraction to describe cold-start components as monotonic state transitions with explicit state dependencies, enabling systematic program refactoring.
    \item We design a CFA-based programming framework and implement \sysname in vLLM to refactor process-tree materialization, tensor loading, and model switching.
    \item We evaluate \sysname across diverse hardware and models, demonstrating its effectiveness in accelerating LLM cold start.
\end{itemize}

\section{Background}\label{sec:background}

\subsection{Problems in LLM Cold Start} \label{ssec:problems}
Cold start is highly prevalent in practical deployments. (i) LLM inference providers may maintain numerous models and load specific models on demand in response to user requests for GPU resource efficiency~\cite{duan2024muxserve,yu2025prism,xiang2025aegaeon}. (ii) When a sudden burst of requests for a specific model exceeds the capacity of active instances, cold starts are needed to launch new instances for elastic scaling~\cite{fu2024serverlessllm,liu2025pipeboost,zhang2025blitzscale}. (iii) For LLM service developers, updating model weights or modifying deployment configurations, such as adjusting parallelism, inevitably requires reloading the entire model~\cite{kwon2023vllm,zheng2024sglang}. (iv) When unpredictable disasters or system failures crash active LLM instances, the system must restart instances via cold start to recover the service~\cite{fu2024serverlessllm,liu2025pipeboost}.

Although some existing works demonstrate that large models can be prepared in host memory to enable faster loading to GPU~\cite{yu2025torporgpuenabledserverlesscomputing}, models cannot always be kept ready in memory under all circumstances~\cite{fu2024serverlessllm,zhang2025blitzscale,hong2024optimus}: host memory may be insufficient, and predictive in-memory caching cannot always guarantee cache hits, which may waste host memory. 

We measured the startup time of several popular models and show the results in \cref{fig:startup_latency_vs_release}\footnote{Llama-3.1-405B has a larger storage footprint and longer loading time than DeepSeek-R1 because the former uses 16-bit weights and the latter 8-bit.}. The startup time is around 50s for <10B models, and around 160s and 300s for \textasciitilde70B and >400B models, respectively. The typical time to first token (TTFT) for a request of length 1000 when the model is ready ranges from 0.08--0.81s. Cold start accounts for the overwhelming majority (99.6\%--99.9\%) of total latency. This pattern has also been observed in other studies~\cite{fu2024serverlessllm,zeng2025medusa}.

\subsection{Problem Analysis} \label{ssec:analysis}

\parab{Startup Phases and Components.} Taking widely adopted LLM inference serving frameworks, vLLM~\cite{kwon2023vllm} and SGLang~\cite{zheng2024sglang}, as examples, we break down the cold-start time in detail. The entire cold-start process can be broadly divided into three stages: \textit{Process Init}, \textit{Model Init}, and \textit{Optimize}. For Qwen3-30B-A3B, vLLM spends 26/58/23 s in the three stages, while SGLang spends 27/58/56 s, respectively. Specifically, in the Process Init stage, the system primarily imports Python packages, initializes the underlying device environment (such as a CUDA context), and sets up inter-process communication (IPC). In the Model Init stage, the system handles the heaviest data transfer tasks, which involve sequentially reading massive model tensors from disk storage to host memory, and subsequently copying them to GPU memory. Finally, in the Optimize stage, the system completes the initialization and allocation of the KV cache, captures execution graphs (such as prefill and decode compute graphs), and compiles the underlying operators, making the model fully ready to process incoming inference requests.

We observe significant similarities in the first two stages, which motivate us to adopt a unified methodology to optimize them together. Specifically, both stages internally initialize multiple identical ``structures'': multiple ``processes'' in Process Init and multiple ``tensors'' in Model Init. We abstract these identical execution entities as ``\textit{components}''. Within the same stage, these components internally undergo the same or highly similar workflows and state transitions.\footnote{The Optimize stage also has room for improvement~\cite{zeng2025medusa}, which is orthogonal and complementary to our method, and can be integrated with our method in the system (\cref{ssec:composability}).}
In the first two stages, there are two kinds of overhead.

$\bullet$ \textbf{The forced sequential execution among components leads to unnecessary serial waiting}. The materialization of processes in Process Init and tensors in Model Init are executed strictly sequentially, following the program's control flow. Viewing components as ``monolithic'' entities and executing them sequentially is highly developer-friendly, as it naturally avoids concurrency risks and facilitates debugging. However, it fails to fully exploit the concurrent processing capabilities of modern servers, and such serialized execution causes the overall latency to accumulate in proportion to the number of components.

$\bullet$ \textbf{The granularity of components mismatches the efficient operational granularity of the underlying hardware.} In Model Init, the system loads tensors from disk to memory and to the GPU one by one, while the underlying storage system prefers block-wise I/O. The core objects in high-level LLM inference algorithms are tensors; the loading programs naturally conform to this algorithmic semantics, which is often not aligned with hardware-friendly access patterns. Tensor sizes are irregular and determined by the model structure, and are often poorly aligned with the preferred I/O granularity of storage devices and interconnects; under tensor parallelism, the dominant parallelization strategy in multi-GPU deployments~\cite{vllm-parallelism-and-scaling}, tensors are partitioned across devices, making transfers less contiguous and more fragmented. The underutilization of hardware bandwidth constitutes a significant bottleneck during cold start.

\subsection{Goal and Challenges}  \label{ssec:challenges}

\parab{Goal and Intuition.} Our goal is to refactor the LLM cold-start program to accelerate the overall cold-start process. Based on the overhead analysis, our optimization intuition is twofold: increasing system concurrency to overlap component execution times, and merging fine-grained operations to fully exploit hardware bandwidth. According to the analysis in \cref{sec:introduction}, directly applying these two ideas faces three challenges.

\parab{Challenge 1: Lack of abstraction for refining internal component logic.} 
The large number and heterogeneity of components prevent direct program modification. We introduce the CFA abstraction to describe component state transitions and dependencies, thereby revealing optimization opportunities for concurrent execution and I/O merging (\cref{ssec:model}).

\parab{Challenge 2: Lack of programming interfaces for applying the CFA model.} 
Components' diverse implementations make it difficult to rewrite them individually into CFA, and their scale also challenges the scalability of the CFA system. We propose a CFA-based programming framework to refactor the original program, which preserves the original program structure as much as possible, enables concurrent execution, and enables isolated CFA state space management (\cref{ssec:programming}).

\parab{Challenge 3: Program correctness risks introduced by concurrent execution.} Substantially refactoring a sequential program into concurrent execution inherently carries a risk of data hazards or even program crashes caused by incorrect execution orders. We provide a rigorous proof demonstrating that concurrent execution based on the CFA model is equivalent to monolithic sequential execution, and analyze data safety in each CFA-refined case (\cref{ssec:correctness}).

\section{The CFA Abstraction} \label{sec:cfa-abstraction}

We use the CFA abstraction to guide the analysis of cross-component optimization and program refactoring, and to prove the correctness of the refactoring.

\subsection{System Model} \label{ssec:model}
\parab{Automata.}
We formalize the LLM cold-start process as a \textit{Communicating Finite Automata (CFA)} system that converges to terminal states. Each physical or logical component involved in the cold start (such as processes, data chunks, etc.) is defined as a \textit{finite automaton (FA)} $c$. A component $c$ is associated with a monotonic state variable $S(c)$, whose value ranges over a finite and ordered state space $\mathcal{S}_c = \{s_{c,0}, s_{c,1}, \dots,$ $s_{c,m_c}\}$. The specific states of the components are defined by the application semantics, but their state transitions throughout the lifecycle must be ``monotonic'' and ``irreversible''. At runtime, each state transition represents the execution of a code block.

\parab{Dependency.} A dependency between components is uniformly expressed as a state-dependency relation, denoted as $(c_i, s_i) \to (c_j, s_j)$. Note that the actual meaning of this dependency is: the ``state transition'' of component $c_j$ from state $s_j$ to state $s_{j+1}$ depends on component $c_i$ having already reached state $s_i$. In other words, only when $c_i$ successfully arrives at state $s_i$ can it trigger the specific operations of component $c_j$ from state $s_j$ to $s_{j+1}$. Because the state transition of each component is monotonic in cold start, for the sake of conciseness, we simplify the state transition as the starting state $(c_j, s_j)$. 

\parab{Runtime Workflow.} In a valid cold start, all components' states and transitions form a \textit{directed acyclic graph (DAG)}. Each component runs the code block of a transition, and publishes the new state on completion; for a dependency, the dependent component waits for the prerequisite condition to be met and proceeds to the next transition; when one component publishes a new state, it may satisfy prerequisite conditions for other components and trigger their transitions (execution).

\begin{figure}[t]
    \begin{lstlisting}[language=Python,numbers=left,
                        numbersep=4pt,firstnumber=1,xleftmargin=1.5em]
    waiters = {}  # dict[tuple[component, state], list[handle]]
    while True:
        event = recv_event()
        key = (event.component, event.state)
        if event.kind == "set_state":
            handles = waiters.get(key, [])
            for h in handles:
                wakeup(h)
            waiters[key] = []
        else:  # wait_state
            if key not in waiters:
                waiters[key] = [event.handle]
            elif len(waiters[key]) > 0:
                waiters[key].append(event.handle)
            else:
                wakeup(event.handle)
    \end{lstlisting}
    
    \caption{Core event loop of a \texttt{channel}.}
    \label{fig:channel-event-loop}
    \end{figure}

\parab{Opportunities from CFA Abstraction.} The CFA abstraction can guide the safe refactoring of LLM cold-start programs targeting the overhead reported in \cref{ssec:problems}. First, the execution processes within FAs are described as state transitions at a finer time granularity. State transitions with no cross-FA dependencies can safely run concurrently. Therefore, the corresponding independent program parts can be refactored, effectively overlapping the execution time of components.

Second, multiple related FAs can be safely merged and simplified, reducing state transitions while maintaining the same semantics from the start to the end state. Specifically, assuming multiple FAs $A_i$ each contain an independent state transition $a^{(i)}_1 \to a^{(i)}_2$, we can logically merge them into a single FA with one state transition $(a^{(1)}_1, a^{(2)}_1, \dots) \to (a^{(1)}_2, a^{(2)}_2, \dots)$. In the program, if the original fine-grained state transitions trigger excessive fragmented I/O requests that mismatch the hardware's ideal I/O granularity, we can safely merge and reorganize these state transitions and their associated I/O operations using FA merging. This can align the refactored I/O granularity with hardware characteristics, thereby achieving higher bandwidth saturation.

\subsection{Programming Framework and Runtime}\label{ssec:programming}

We provide a programming framework to refactor the LLM cold-start program with the CFA abstraction. Components in the program run at different levels (e.g., processes, models, tensors); it is not necessary to manage all FAs together. The CFA framework enables users to specify a set of related FAs in a single namespace, abstracted as a \texttt{channel}. Each \texttt{channel} independently maintains the state space of all components within it, providing a lightweight interaction and synchronization scope; different \texttt{channels} isolate their state management for heterogeneous components.

Developers customize the cold-start program to enable the CFA abstraction. Developers can explicitly express state transitions by inserting the \texttt{channel.set\_state()} primitive directly into the components' existing initialization code. Once a component completes a substantial portion of its working logic, calling this primitive publishes the newly reached monotonic state to the \texttt{channel}, which then wakes all components waiting on that state.

The \texttt{channel.wait\_state()} primitive is inserted into the program's execution flow to declare explicit component dependencies. When the current component reaches this code position, it either blocks until the required state of the other component becomes available or continues immediately if that state has already been published. 

\begin{figure}[t]
    \centering
\begin{lstlisting}[language=Python,xleftmargin=2.5em]
def process_A(channel): # Component A
    A_init1()
    channel.set_state("A", "A1")
    A_init2()
    channel.set_state("A", "A2")
def process_B(channel): # Component B
    B_init1()
    channel.set_state("B", "B1")
    channel.wait_state("A", "A1")
    B_init2()
    channel.set_state("B", "B2")
\end{lstlisting}

    \caption{Process/Thread Examples of CFA Programming}
    \label{fig:code_examples}
\end{figure}

In implementation, the \texttt{channel} runs continuously in the background as a daemon process (or daemon thread) responsible for handling all \texttt{set\_state} and \texttt{wait\_state} events. Figure~\ref{fig:channel-event-loop} illustrates its internal workflow.
This daemon process/thread maintains, for each published state key (\texttt{component}, \texttt{state}), a queue of blocked waiters stored in a dictionary (line 1). Upon receiving a \texttt{set\_state} event (lines 6--9), it looks up the corresponding queue in the dictionary (line 6), wakes all blocked waiters (lines 7--8), and then marks that state as already published by resetting the entry to the empty list (line 9). Upon receiving a \texttt{wait\_state} event (lines 11--16), it either creates (line 12) or appends (line 14) to the corresponding queue, or resumes the waiter immediately if the required state has already been published (line 16).

\parab{An Example.} The CFA programming framework can be applied to concurrent processes, threads, and coroutines. In the example shown in \cref{fig:code_examples} (coroutine example in \cref{app:fig:code_examples} in \cref{sec:coroutine}), we consider two components, $A$ and $B$, with states $\{A1, A2\}$ and $\{B1, B2\}$, and a dependency $(A, A1)$ $\to$ $(B, B1)$; in the cases of synchronous processes/threads and asynchronous coroutines, developers can use identical interfaces to achieve concurrent execution and safe synchronization on dependencies, and keep most of the original code unchanged.

\subsection{Correctness Analysis} \label{ssec:correctness}
We formulate the CFA-refined program and the original program as DAGs, and prove that both programs terminate and converge to the same final states. The definition and theorems are listed below, with the proofs in \cref{sec:correctness}.

\begin{definition}[DAG CFA]
If all internal state transitions and cross-component state dependencies of a CFA form a DAG, we define it as a DAG CFA.
\end{definition}

\begin{definition}[Chain CFA]
Traditional monolithic sequential component execution, whose execution order is strictly governed by the program's control flow, can be abstracted into a special DAG CFA, termed a Chain CFA. 
\end{definition}

\begin{theorem}\label{thm:termination}
A DAG CFA is guaranteed to terminate.
\end{theorem}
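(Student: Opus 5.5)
We model the execution of a DAG CFA as a graph $G=(V,E)$ and argue by a progress/measure argument. The vertices $V$ are the pairs $(c,s)$ with $s\in\mathcal{S}_c$, where a vertex stands for the transition of $c$ out of state $s$, or equivalently the event ``$c$ has reached $s$''. The edges $E$ are of two kinds: the internal edges $(c,s_{c,k})\to(c,s_{c,k+1})$, and the declared dependencies $(c_i,s_i)\to(c_j,s_j)$. Since every $\mathcal{S}_c$ is finite and there are finitely many components, $V$ is finite. By hypothesis $G$ is acyclic. We also make explicit the standing assumption the paper uses implicitly: the code block executing each individual transition terminates, and the runtime of \cref{fig:channel-event-loop} wakes every waiter whose key has been published. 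The former is a property of the original component code; the latter we check directly from lines 6--16.

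The central quantity is the potential $\Phi=\sum_c (m_c - \mathrm{idx}(S(c)))$, the number of transitions not yet performed. Monotonicity and irreversibility of $S(c)$ mean that each completed transition decreases $\Phi$ by exactly one and that no transition ever increases it. Hence at most $|V|$ transitions can ever occur, so there is no infinite execution made of transitions. It remains to rule out the only other way of failing to terminate, namely getting stuck: a configuration with $\Phi>0$ in which every unfinished component is blocked in a \texttt{wait\_state}.

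The no-deadlock step is the crux. Suppose we are in such a stuck configuration. Each blocked component $c_j$, sitting at state $s_j$, is waiting on some predecessor $(c_i,s_i)$ that has not been published. That predecessor is itself not yet reached, so $c_i$ is unfinished, and by the stuck assumption $c_i$ is also blocked, at some earlier vertex from which $(c_i,s_i)$ is reachable along internal edges. Following ``blocked-on'' edges together with internal edges gives an infinite backward walk in the finite graph $G$, which must revisit a vertex and so yields a cycle, contradicting acyclicity.

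Equivalently, we can take a topological order of $G$ and pick the minimal unfinished vertex $v$. All of its predecessors are reached, so $v$ is enabled. By the event-loop semantics the corresponding \texttt{wait\_state} returns: if the \texttt{set\_state} was already processed, line 16 fires; otherwise the later \texttt{set\_state} wakes the waiter via lines 6--8. So $v$'s transition runs and $\Phi$ strictly decreases. Induction on $\Phi$ then shows that every run reaches $\Phi=0$, i.e.\ all components reach their terminal states.

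I expect the main obstacle to be this liveness argument about the channel rather than the combinatorics. In particular, we must rule out lost wakeups: a \texttt{set\_state} racing with a \texttt{wait\_state} for the same key. We handle this by noting that the event loop serializes all events, so for a given key either the wait is recorded before the set, and the waiter is then in the list woken at line 7, or it arrives after, finds the empty list, and is resumed at line 16. One corner case needs care: a \texttt{wait\_state} arriving first creates the list, whereas a \texttt{set\_state} with no prior waiter writes an empty list at line 9. The published and unpublished cases are therefore always distinguished correctly, except for a waiter arriving after a published key whose list is empty, which is precisely the immediate-resume branch.
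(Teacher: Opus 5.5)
Your proof is correct and follows essentially the same route as the paper's. The paper repeatedly takes a state of in-degree $0$ and positive out-degree in the remaining DAG (your earliest unreached vertex in a topological order), lets it complete by the finite-transition assumption, deletes its outgoing edges, and concludes from the finiteness of the edge set, which plays the role of your potential $\Phi$. Your additions tighten the argument without changing the route: you count a vertex as finished once it is \emph{reached}, which also covers dependencies leaving terminal states such as $(\texttt{parent},\texttt{Ready})$, where the paper's phrase ``executing its transition to the next state'' does not literally apply; and you check that the channel event loop cannot lose wakeups.
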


\begin{theorem} \label{thm:equivalence}
If the internal state transition sequence of each FA in a DAG CFA is identical to that in a Chain CFA, then when both CFAs terminate, the final states of all FAs will be exactly the same.
\end{theorem}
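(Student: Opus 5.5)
We prove Theorem~\ref{thm:equivalence} by showing that the final state of each FA is determined only by its own internal transition sequence, not by the interleaving in which transitions run. The plan has three steps.

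\textbf{Step 1: fix the semantics.} We model an execution of a CFA as a sequence of events. Each event is the firing of one transition $(c, s_{c,k}) \to (c, s_{c,k+1})$. A firing is enabled exactly when two conditions hold: $S(c) = s_{c,k}$, and every declared dependency $(c_i, s_i) \to (c, s_{c,k})$ is satisfied. Because states are monotonic and irreversible, a dependency is satisfied once $c_i$ has published $s_i$. The program of $c$ is a fixed sequence of code blocks, so the transition fired from $s_{c,k}$ is always the same one. Hence the successor of each state is unique and independent of the other components. The hypothesis of Theorem~\ref{thm:equivalence} says that this sequence $s_{c,0}, s_{c,1}, \dots, s_{c,m_c}$ is the same in the DAG CFA and in the Chain CFA. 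The Chain CFA adds only extra ordering edges, namely control-flow edges between consecutive components.

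\textbf{Step 2: show every component reaches $s_{c,m_c}$ at termination.} By Theorem~\ref{thm:termination}, both CFAs terminate. We then show that a terminated run cannot leave a component short of its last state. Suppose some component is stuck at $s_{c,k}$ with $k < m_c$. Then some dependency $(c_i, s_i) \to (c, s_{c,k})$ is unsatisfied, so $c_i$ is stuck strictly below $s_i$, and that state is in turn blocked by another unsatisfied prerequisite. Following this chain backwards in the finite DAG must end at a source node. A source has no incoming dependency, so its transition is enabled, which contradicts termination. Therefore every FA ends in $s_{c,m_c}$ in the DAG CFA. The same argument applies to the Chain CFA, since it is itself a DAG CFA.

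\textbf{Step 3: conclude.} Step 2 identifies the final state of each FA in both CFAs as the last element of its own sequence. These sequences are identical by hypothesis, so the final states coincide.

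The main obstacle is Step 1, specifically justifying that the effect of a transition does not depend on interleaving. In particular, the code block of $c$ must not read shared data written by an unrelated component that has no declared dependency. As an equivalence of \emph{states} the claim follows from monotonicity and uniqueness of successors. If one also wants equality of \emph{data}, I would add an explicit non-interference assumption: every read-after-write on shared data is covered by a declared dependency. Under that assumption I would prove, by a standard swap argument (commuting adjacent independent events), that any DAG execution can be reordered into the chain order with the same effects. I would defer that data-level argument to the per-case safety analysis.
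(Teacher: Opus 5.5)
Your proposal is correct and follows the same route as the paper's proof. The paper also shows (1) that at termination every FA of any DAG CFA, including the Chain CFA, must sit at its last state, since otherwise the acyclic graph would still contain an enabled ($\mathrm{in}=0 \wedge \mathrm{out}>0$) transition, and (2) that identical per-FA transition sequences then force identical final states; your backward chase along unsatisfied dependencies simply proves the existence of that enabled node explicitly, though the chase ends not at a source of the whole DAG but at a non-final current state with no unsatisfied incoming dependency.
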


\begin{figure}[t]
    \centering
    \includegraphics[width=\columnwidth]{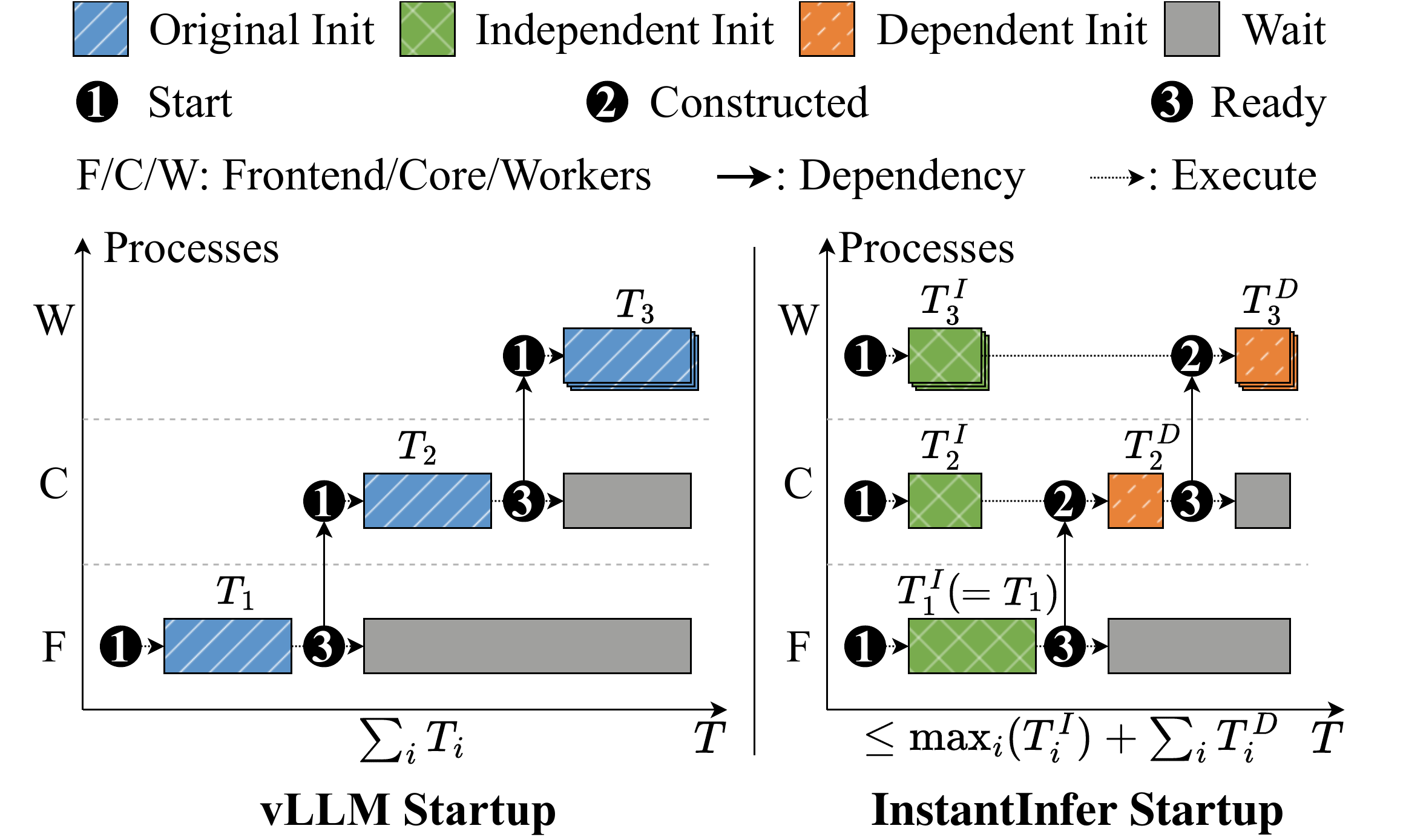}
    \caption{Process-tree materialization in conventional engines and under CFA-guided execution.}
    \label{fig:process-startup}
\end{figure}

\section{CFA for LLM Cold Start} \label{sec:cfa-for-llm}
We apply the CFA abstraction and programming framework to three key procedures of LLM cold start.

\subsection{Process-Tree Materialization} \label{ssec:process-materialization}

Modern LLM inference systems organize their processes in a three-level hierarchy with one frontend process, one core process, and a set of worker processes, where each worker is bound to one GPU. The traditional initialization procedure strictly follows a serial execution protocol: a parent process must complete its own initialization entirely before spawning and configuring its child processes. This level-by-level progression makes the cold-start time proportional to the depth of the process tree. We observe that these processes share similar execution logic and monotonic progress during cold start, and therefore abstract them into a CFA.

\parab{CFA Model.} We define the frontend, core, and workers in the system as FAs in the same channel, where an FA maintains the states of \texttt{\{Start, Constructed, Ready\}} (\ding{182}, \ding{183}, and \ding{184} in \cref{fig:process-startup}). The transition from {\tt Start} to {\tt Constructed} usually involves importing packages, establishing IPC connections, and initializing devices (like CUDA contexts).

In the traditional serial startup logic, the dependency between a parent process and a child process is coarse-grained: the parent must first reach \texttt{Ready}, and the child is launched as \texttt{Start}, i.e., $(\texttt{parent}, \texttt{Ready})$ $\to$ $(\texttt{child},\texttt{Start})$.

Refining each component lifecycle into three states exposes an intermediate dependency boundary: a child can complete the independent stage of local setup (from \texttt{Start} to \texttt{Constructed}) before the parent is \texttt{Ready}, and blocks only at the final parent-dependent stage, i.e., $(\text{\tt parent}, \text{\tt Ready})$ $\to$ $(\text{\tt child}, \text{\tt Constructed})$. 

\parab{Runtime Workflow.} \cref{fig:process-startup} (left) shows the workflow of the traditional monolithic sequential process initialization. The overall time is the sum of the times of Frontend, Core, and (the maximum of) Workers.

The CFA-refined workflow is as follows: first, the system concurrently creates all processes. Each process independently performs initialization operations, such as importing packages and allocating local memory; upon completion, each process independently enters the \texttt{Constructed} state. 

The Frontend process then advances to the \texttt{Ready} state and waits for completion signals from other processes. Meanwhile, after the Core process evolves to \texttt{Constructed} on its own, it suspends and waits for the Frontend process to reach the \texttt{Ready} state; only when this condition is met can the Core process obtain the necessary context and enter the parent-dependent stage. 

Similarly, after the Workers complete the independent initialization and enter the {\tt Constructed} state, they wait for the Core process to reach the \texttt{Ready} state to complete their parent-dependent stage.

\parab{Time Reduction.} In the traditional vLLM sequential initialization workflow, let $T_1$, $T_2$, and $T_3$ denote the initialization time of the Frontend, Core, and Workers, respectively. The total initialization time is therefore
$T_{\mathrm{vLLM}} = \sum_{i=1}^{3} T_i$.

With \sysname refinement, each process initialization is decomposed into an independent stage and a parent-dependent stage, so that $T_i = T_i^I + T_i^D$, where $T_1^D=0$ because the frontend has no parent-dependent stage. The total initialization time becomes the maximum time over all root-to-leaf paths:\\
\centerline{\small
$
T_{\mathrm{CFA}} = \max_i \{ T_i^I + \sum_{j=i}^{3} T_j^D \}
\leq \max_i \{T_i^I\} + \sum_{i=1}^{3} T_i^D
< T_{\mathrm{vLLM}}.
$
}

The first inequality holds because the independent stage is overlapped across processes, while the parent-dependent stage remains ordered along the parent--child chain. The second strict inequality holds as long as at least two processes have nonzero independent work, which is the common case in practice. Therefore, the refined initialization time is strictly smaller than the original serial initialization time.

\parab{Correctness and Safety.} By theorems in \cref{ssec:correctness}, the CFA-refined program terminates with the same final states as the original program. If a child state depends on a parent state, for example, when the child must wait for the parent to prepare an IPC address before establishing the connection, the parent state needs to be initialized during the transition \texttt{Constructed} $\to$ \texttt{Ready}, and the child state can be allocated before 
\texttt{Constructed} but must be initialized after \texttt{Constructed}.

\subsection{Tensor Materialization} \label{ssec:tensor-materialization}

In the traditional LLM cold start, the system materializes tensors from the model file one by one, sequentially reading them from disk to host memory, and then copying them from host memory to GPU memory. We observe that the massive tensors share highly similar movement patterns, so we naturally abstract them as a CFA.

\parab{Na\"ive CFA Model.} We first propose a na\"ive CFA model to refine the workflow. On a single distributed worker node (rank), we define all model tensors (including sharded subsets) as FAs, and refactor the sequential loading workflow. 
In the na\"ive CFA model, each tensor component has three monotonically increasing states: \{\texttt{InDisk}, \texttt{InMem}, \texttt{InGPU}\}. 

Since the tensors are logically independent of each other, theoretically, there is no need to enforce strict CFA state dependencies among them. However, allowing all tensors to perform I/O operations concurrently without restriction can easily cause underlying bus congestion, paradoxically preventing I/O bandwidth saturation. 

\begin{figure}[t]
    \centering
    \includegraphics[width=\linewidth]{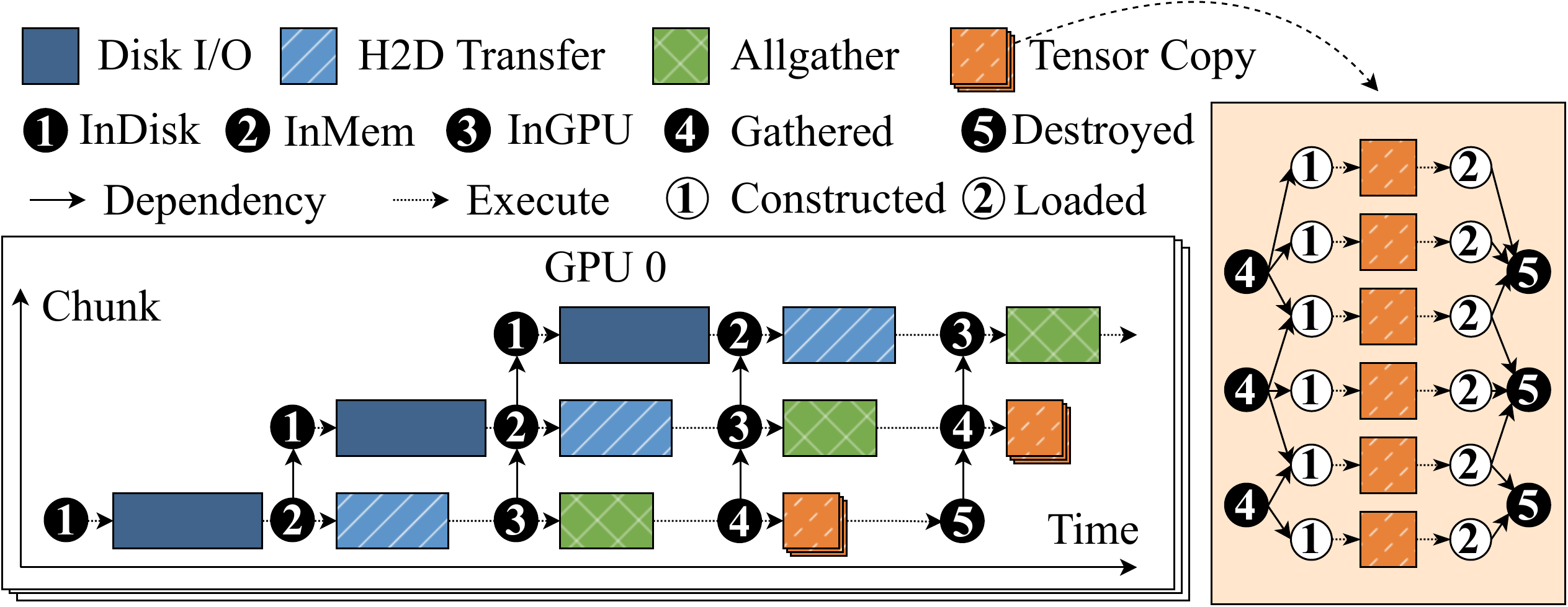}
    \caption{CFA-guided model loading.}
    \label{fig:model-loading-cfa}
\end{figure}

To limit the contention, we explicitly set state dependencies $(T_i, \texttt{InMem}) \to (T_{i+1}, \texttt{InDisk})$ on the disk-to-memory data path to enforce sequential tensor copying along this path; similarly, on the memory-to-GPU path, we set a comparable dependency 
$(T_i, \texttt{InGPU})$ $\to$ $(T_{i+1}, \texttt{InMem})$. 

At runtime, this forms a two-stage pipeline for the disk-memory-GPU data paths. Each pipeline stage loads only one tensor at a time, while the two stages enable efficient concurrency and temporal overlap.

\parab{Refined CFA Model.} Although the na\"ive model introduces a pipeline, performing multiple independent I/O calls for each fine-grained tensor on both paths separately incurs excessive system overhead, which is non-negligible for large models. To further optimize the loading process, we must aggregate operations. However, logically consecutive tensors in the algorithm description are not necessarily adjacent in the actual disk storage files, making it impossible to merge the I/O operations for adjacent tensors directly. 

To this end, we introduce a novel execution component into the system: the data chunk (\texttt{chunk}). We divide the entire physical model file on disk into multiple \texttt{chunks} of equal length (the last \texttt{chunk} may be partial), and precompute the mapping associations between \texttt{chunks} and individual logical tensors. If a tensor's content falls within the range of a certain \texttt{chunk}, the system associates the tensor with that \texttt{chunk}. One tensor may span and be associated with one or multiple \texttt{chunks}, and conversely, one \texttt{chunk} may be associated with one or multiple tensors. We represent this association using \texttt{tensor.chunks} and \texttt{chunk.tensors}, and the system precomputes this mapping in advance. 

The states of the \texttt{chunk} component are \{\texttt{InDisk}, \texttt{InMem}, \texttt{InGPU}, \texttt{Destroyed}\} (\ding{182}, \ding{183}, \ding{184}, and \ding{186} in \cref{fig:model-loading-cfa}); meanwhile, in the refined CFA model, we redefine each tensor component to have two states: \{\texttt{Alloc}, \texttt{Loaded}\} (\ding{172} and \ding{173} in \cref{fig:model-loading-cfa}).

The cross-component dependencies are designed for efficient data flow (\cref{fig:model-loading-cfa}). \textbf{(i)} Among the \texttt{chunk}s, we set dependencies $(C_i, \texttt{InMem})$ $\to$ $(C_{i+1}, \texttt{InDisk})$ and $(C_i, \texttt{InGPU})$ $\to$ $(C_{i+1}, \texttt{InMem})$. These dependencies form a two-stage pipeline with temporal overlap (similar to the na\"ive model). 
\textbf{(ii)} Between a \texttt{chunk} and its associated tensors, we establish the dependency $(C_i, \texttt{InGPU}) \to (T_j, \texttt{Alloc})$. This means that when a required \texttt{chunk} is ready in the GPU memory, it can trigger the associated tensor to transition from \texttt{Alloc} to \texttt{Loaded}, performing efficient data copying and assembly within the GPU to make the tensor ready for use.
\textbf{(iii)} Between a tensor and its associated \texttt{chunk}s, we set the dependency $(T_i, \texttt{Loaded}) \to (C_j, \texttt{InGPU})$. These dependencies mean that once all tensors associated with a chunk have reached \texttt{Loaded}, the chunk can transition from \texttt{InGPU} to \texttt{Destroyed}, releasing GPU memory.

\parab{Optimization of Distributed Loading.}
To further improve hardware utilization, we introduce an optimization mechanism for the multi-path concurrent loading of \texttt{chunk}s. If the system is configured with multiple ranks (e.g., multiple GPUs), we load each \texttt{chunk} cooperatively across multiple paths: each \texttt{chunk} is divided into multiple equal-length ``segments'' according to the number of ranks. 

Within each rank, the assigned segment is loaded following the aforementioned pipeline process. Once these segments become locally ready in their respective GPUs, the system performs an additional \texttt{AllGather} collective communication operation, after which the corresponding \texttt{chunk} transitions to an additional state, \texttt{Gathered}, which triggers the chunk-to-tensor data copy. By stitching these segments together via cross-device transmission, each rank ultimately acquires the complete \texttt{chunk} data. 

This optimization fully utilizes data-transfer channels between ranks and disk for concurrent reads, and then uses the peer-to-peer network among ranks to assemble the complete data.

\parab{Correctness and Safety.} By theorems in \cref{ssec:correctness}, the CFA-refined program converges to the same state as the original program. For data safety, each logical tensor is assigned values after its own allocation and its associated chunks' readiness, which guarantees the tensor values are valid.

\subsection{Runtime Model Switching} \label{ssec:model-switching}

In traditional dynamic multi-model serving systems, model switching is typically treated as an indivisible serial phase transition: the system first shuts down and unloads the current old model, reclaims its associated GPU memory and system resources, and then loads and initializes the new model from scratch. This purely serial approach inevitably introduces substantial service disruption time. We observe that the new model's initialization contains an early GPU-independent stage and a later GPU-dependent stage, while the old model's teardown first releases GPU resources and only then reclaims CPU- and host-memory-side resources. We capture this overlap opportunity by abstracting the old and new models uniformly as CFAs, as illustrated in \cref{fig:model-switching-cfa}.

\begin{figure}[t]
    \centering
    \includegraphics[width=\linewidth]{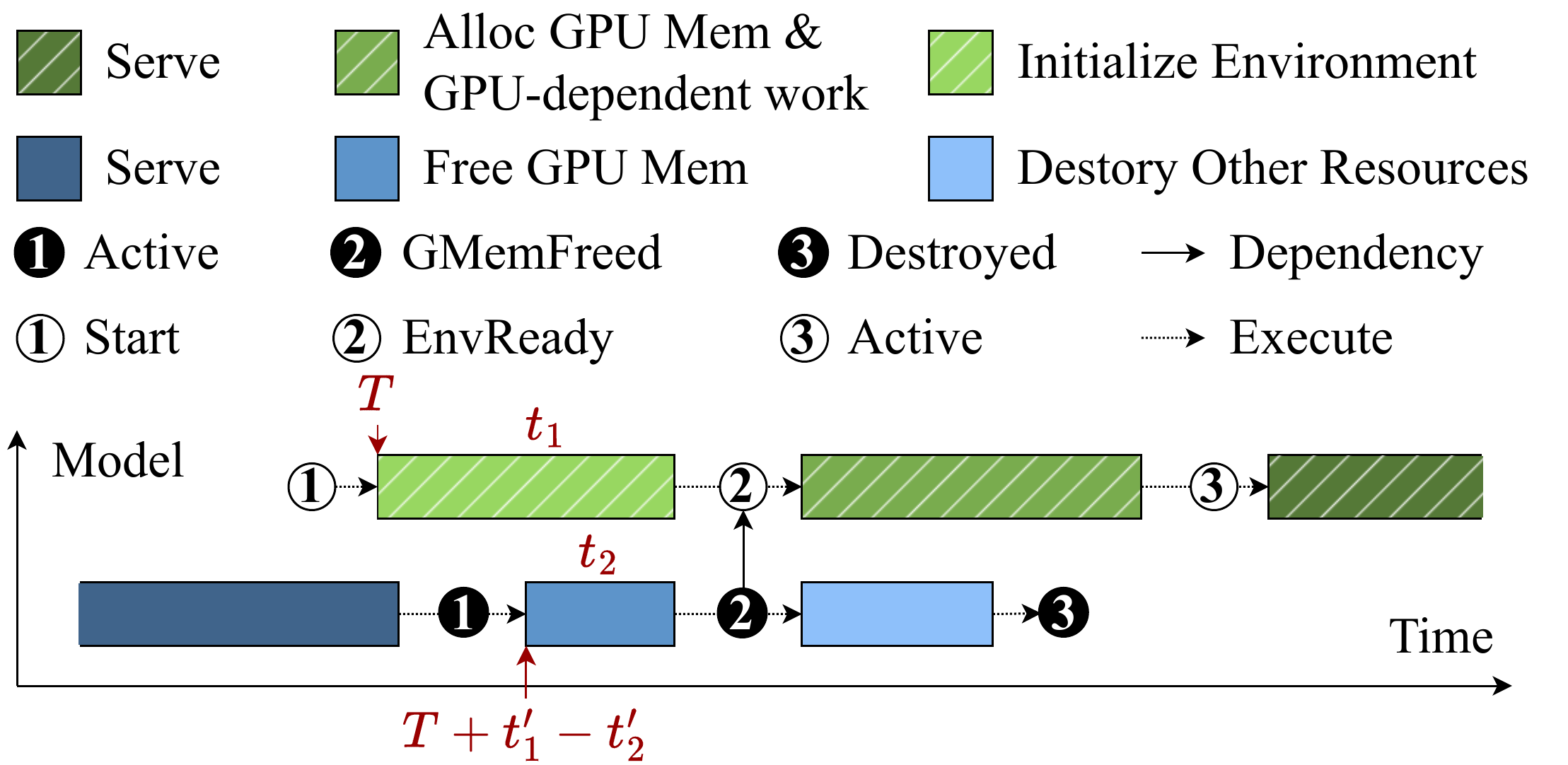}
    \caption{CFA-guided runtime model switching.}
    \label{fig:model-switching-cfa}
\end{figure}

\parab{CFA Model.} In the CFA model for model switching, the old model ($M_1$) and the new model ($M_2$) are treated as two concurrently executing core components. We define the lifecycle states of the old model as: \{\texttt{Active}, \texttt{GMemFreed}, \texttt{Destroyed}\} (\ding{182}, \ding{183}, and \ding{184} in \cref{fig:model-switching-cfa}); and the lifecycle states of the new model as: \{\texttt{Start}, \texttt{EnvReady}, \texttt{Active}\} (\ding{172}, \ding{173}, and \ding{174} in \cref{fig:model-switching-cfa}). Because the old and new models compete for mutually exclusive resources like GPU memory, we explicitly establish a cross-component state dependency between them: $(M_1, \texttt{GMemFreed})$ $\to$ $(M_2, \texttt{EnvReady})$. This dependency means that the new model can trigger the state transition from \texttt{EnvReady} to \texttt{Active} only after the old model has fully released its GPU memory, thereby safely proceeding with GPU-memory allocation operations.

\parab{Workflow with an Optimization.} Running the two models with the CFA refinement above achieves safe concurrent model switching by overlapping execution (\cref{fig:model-switching-cfa}). We also provide an optimization to further reduce the ``new model waiting time''. The main idea is to make the two states ($M_1$, \texttt{GMemFreed}) and ($M_2$, \texttt{EnvReady}) as close as possible, so that the new model does not block for a long time waiting for GPU memory to be released.

Let $T$ be the new model launch time ($M_2$, \texttt{Start}), $t_1$ the new model's environment initialization time ($M_2$, \texttt{Start}) $\to$ ($M_2$, \texttt{EnvReady}), and $t_2$ the old model's GPU-memory release time ($M_1$, \texttt{Active}) $\to$ ($M_1$, \texttt{GMemFreed}). Using historical measurements or offline profiling, we obtain $t'_1$ and $t'_2$ as predictions for $t_1$ and $t_2$. Thus, we set the old model's shutdown time to be $\max\{T + t'_1 - t'_2, T\}$ (\cref{fig:model-switching-cfa}). This schedule minimizes the time the new model waits for GPU memory to be ready.

\section{Implementation} \label{sec:implementation}

We integrate \sysname with vLLM~\cite{kwon2023vllm} by replacing blocking initialization routines with the CFA programming abstraction, which modifies \textasciitilde3000 lines of Python code while preserving the execution semantics of the underlying system. To support the chunk-level I/O pattern efficiently, we further develop a lightweight C++ extension module of \textasciitilde2700 lines for fast chunk I/O across GDS-backed storage (via \texttt{cuFile}), legacy storage (via \texttt{libaio}), and in-memory storage such as tmpfs (via \texttt{cudaMemcpyAsync}). These changes demonstrate that CFA can be incrementally adopted in practice without intrusive system redesign.

During tensor materialization, the AllGather operation is executed over communication groups, which are created by \texttt{torch.distributed.new\_group()} with the NCCL backend.
For example, on four GPUs, users may specify two disjoint two-GPU groups instead of the full set.
By default, we use the world group containing all GPUs.

\section{Evaluation} \label{sec:evaluation}

\begin{figure*}[t]
    \centering

    \includegraphics[width=0.45\textwidth]{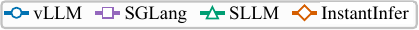}
    \\

    \begin{subfigure}[t]{0.325\textwidth}
        \centering
        \includegraphics[width=\linewidth]{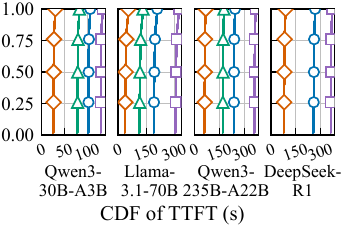}
        \caption{[H20] Single request.}
        \label{fig:eval-single}
    \end{subfigure}
    \hfill
    \begin{subfigure}[t]{0.325\textwidth}
        \centering
        \includegraphics[width=\linewidth]{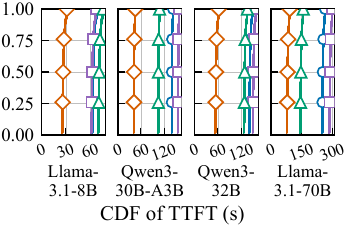}
        \caption{[L40] Single request.}
        \label{fig:eval-single-l40}
    \end{subfigure}
    \hfill
    \begin{subfigure}[t]{0.325\textwidth}
        \centering
        \includegraphics[width=\linewidth]{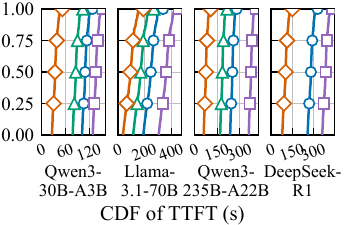}
        \caption{[H20] 32-request burst.}
        \label{fig:eval-burst}
    \end{subfigure}

    \caption{Cold-start TTFT CDFs under different hardware and request settings.}
    \Description{CDF plots comparing TTFT across vLLM, SGLang, SLLM, and InstantLLM under H20 and L40 settings.}
    \label{fig:eval-ttft-cdf}
\end{figure*}

This section addresses five key questions:
(1) How does \sysname affect end-to-end cold-start latency across diverse configurations? (\cref{ssec:eval-end-to-end})
(2) How does cold-start latency scale with the number of instances for \sysname? (\cref{ssec:eval-scalability})
(3) How does \sysname impact model switching latency? (\cref{ssec:eval-switching})
(4) What is the performance contribution of each \sysname mechanism? (\cref{ssec:eval-ablation})
(5) What is the hardware utilization and resource overhead of \sysname? (\cref{ssec:eval-hardware})

\subsection{Evaluation Settings} \label{ssec:settings}

\noindent
\textbf{Testbed.}  
We evaluate \sysname on two representative GPU clusters, each comprising 8 NVIDIA GPUs, to capture both high-end and more commonly deployed configurations. The first testbed uses NVLink-connected H20 (141~GB) GPUs with 50~GB/s networked storage (GPFS~\cite{gpfs}), representing a high-bandwidth, tightly coupled environment. The second employs PCIe-connected L40 (48~GB) GPUs with 20~GB/s networked storage, reflecting a more bandwidth-constrained and cost-efficient deployment. Each node is equipped with 2~TB of host memory and an Intel\textregistered~Xeon\textregistered~Platinum 8468 CPU (192 logical cores). Unless otherwise specified, we load models from storage rather than memory to reflect realistic cold-start scenarios in production. 

\noindent
\textbf{Baselines.}  
We compare \sysname against three representative open-source inference frameworks: vLLM (0.13.0)~\cite{kwon2023vllm}, SGLang (0.5.9)~\cite{zheng2024sglang}, and ServerlessLLM (abbreviated as SLLM in figures; version 0.8.0)~\cite{fu2024serverlessllm}. vLLM and SGLang are among the most widely deployed LLM serving systems, representing state-of-the-art performance in conventional (non-serverless) settings. In contrast, ServerlessLLM is specifically designed for serverless environments and incorporates optimizations for cold-start latency.  All baselines rely on Safetensors~\cite{safetensors} as the de facto standard weight format, while ServerlessLLM further requires converting model weights into a proprietary format for each parallelism configuration before execution. We also prepare and enable the torch compile cache~\cite{torch-compile-cache} for all systems, which is the default and recommended configuration for these systems.

To isolate model loading performance, we further compare against standalone loaders integrated into existing systems, including Safetensors (0.7.0), Run:ai Model Streamer (0.15.6)~\cite{runai-model-streamer}, and fastsafetensors (0.2.2)~\cite{yoshimura2025speeding}. We also include ServerlessLLM as a point of comparison due to its state-of-the-art, albeit proprietary, loading pipeline.

\noindent
\textbf{Models.}  
We select parallelism conservatively to match practical serving constraints following vLLM's guidance~\cite{vllm-parallelism-and-scaling}, balancing memory footprint and parallel efficiency. For dense models, we choose the minimum tensor parallelism (TP) degree required to fit the model weights while preserving KV-cache capacity on the target GPUs, since larger TP increases cross-GPU communication with limited benefit once memory feasibility is satisfied. For MoE models, we additionally use expert parallelism (EP) to shard experts and reduce expert-side load imbalance.
Under this policy, we consider four models on the H20 testbed: (1) Qwen3-30B-A3B~\cite{yang2025qwen3technicalreport} with TP=1; (2) Llama-3.1-70B~\cite{llama-3.1} with TP=2; (3) Qwen3-235B-A22B~\cite{yang2025qwen3technicalreport} with TP=4 and EP=4; and (4) DeepSeek-R1~\cite{guo2025deepseek} with TP=8 and EP=8. 

On the L40 testbed, we evaluate four additional configurations that better match the reduced GPU-memory capacity: (1) Llama-3.1-8B~\cite{llama-3.1} with TP=1; (2) Qwen3-30B-A3B~\cite{yang2025qwen3technicalreport} with TP=2 and EP=2; (3) Qwen3-32B~\cite{yang2025qwen3technicalreport} with TP=4; and (4) Llama-3.1-70B~\cite{llama-3.1} with TP=8. 

\noindent
\textbf{Workload.} Following standard practices in prior work~\cite{kwon2023vllm,fu2024serverlessllm}, we construct realistic workloads using the ShareGPT~\cite{shareGPT} dataset, which contains ChatGPT conversation histories. We filter the dataset to include only requests with prompt lengths up to 32K tokens, ensuring compatibility with the context limits of all evaluated models.

\noindent
\textbf{Metrics.}  
Our primary end-to-end metric is time to first token (TTFT), which captures user-perceived latency in serverless LLM serving. TTFT includes engine startup latency, queuing delay, and prefilling latency, with engine startup typically dominating the critical path in cold-start scenarios.  

To better understand performance bottlenecks, we additionally report fine-grained metrics in our ablation study, including engine startup time, model loading time, and hardware utilization.

\subsection{End-to-End Cold-Start Latency}
\label{ssec:eval-end-to-end}

\noindent
\textbf{Single-request cold start.} We first evaluate the end-to-end cold-start TTFT for a single request, representing the baseline startup efficiency. For each measurement, we generate a request from the dataset and immediately launch a fresh engine instance to process it, repeating this procedure across multiple trials to capture the latency distribution. As shown in \cref{fig:eval-single}, \sysname achieves the lowest cold-start TTFT across all evaluated models. Compared to ServerlessLLM, the state-of-the-art serverless inference system, \sysname achieves a 2.7$\times$--2.9$\times$ speedup in single-request TTFT, owing to faster process and tensor materialization via CFA. Furthermore, \sysname achieves a 3.2$\times$--7.2$\times$ speedup over standard inference engines like vLLM and SGLang, where SGLang exhibits higher latency than vLLM due to its longer compilation and graph capture overhead.

On the L40 testbed, \sysname maintains a significant performance advantage. As shown in \cref{fig:eval-single-l40}, \sysname achieves a 1.8$\times$--2.6$\times$ speedup in single-request cold start over ServerlessLLM. When compared to standard inference engines, \sysname achieves a 2.1$\times$--3.7$\times$ speedup over vLLM and SGLang.

\begin{figure}[t]
    \centering
    \includegraphics[width=\linewidth]{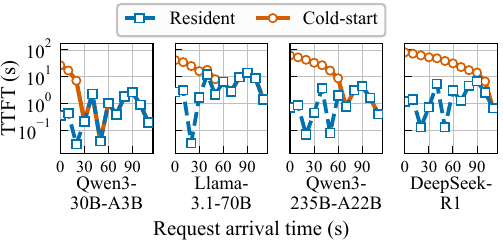}
    \caption{[H20] TTFT for the same requests on cold-start and resident engines.}
    \label{fig:eval-cold-to-hot}
\end{figure}

\begin{figure}[t]
    \centering
    \includegraphics[width=\linewidth]{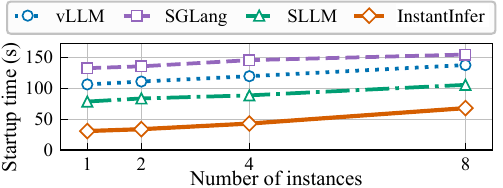}
    \caption{[H20, Qwen3-30B-A3B] Multi-instance startup time with in-memory caching.}
    \label{fig:eval-multi-instance}
\end{figure}

\noindent
\textbf{Burst cold start.} In modern serving environments, cold starts are frequently triggered by sudden traffic spikes. To evaluate this scenario, we extend our methodology by launching a fresh engine instance while dispatching a concurrent burst of sampled requests, repeating this procedure across multiple trials to capture the latency distribution. \cref{fig:eval-burst} shows the TTFT distribution for a burst size of 32, while additional results for burst sizes of 16 and 64 are provided in \cref{sec:burst}. Under these conditions, \sysname maintains its performance advantage, achieving a 1.8$\times$--2.5$\times$ TTFT speedup over ServerlessLLM and a 2.6$\times$--4.9$\times$ speedup over vLLM and SGLang. Compared with the single-request setting, the relative improvement is slightly smaller because a larger burst increases the post-startup portion of TTFT, especially request scheduling and prefill queuing overheads. Across the evaluated settings, increasing the burst size by one raises the TTFT by approximately 0.27\,s on average. These results show that \sysname can still bring the model online quickly enough to absorb substantial traffic bursts.

\noindent
\textbf{Impact of cold starts on steady-state latency.} To verify that \sysname's rapid initialization does not degrade steady-state performance, we replay a sequence of requests sampled from the dataset at fixed 10-second intervals in two settings: a cold-start setting, where a \sysname engine is launched upon the arrival of the first request, and a resident setting with an already-running vLLM engine. As shown in \cref{fig:eval-cold-to-hot}, requests arriving during the initial cold-start phase experience higher latency as the system materializes. However, once the engine is fully initialized and the pending queue is cleared, \sysname's request latency rapidly converges to match that of the resident vLLM engine. This confirms that \sysname accelerates startup without introducing any runtime overhead during continuous inference.

\begin{figure}[t]
    \centering
    \includegraphics[width=\linewidth]{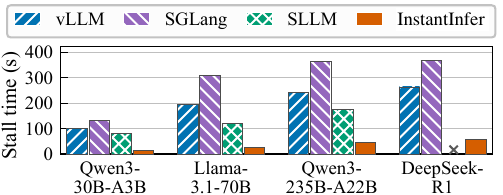}
    \caption{[H20] Service stall time during model switching.}
    \label{fig:eval-switching}
\end{figure}

\begin{figure}[t]
    \centering
    \includegraphics[width=\linewidth]{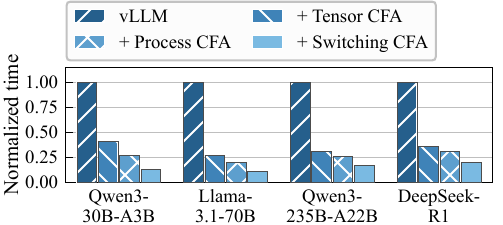}
    \caption{[H20] Startup time improvement from each design.}
    \label{fig:eval-ablation}
\end{figure}

\begin{figure*}[t]
    \centering
    \includegraphics[width=0.6\linewidth]{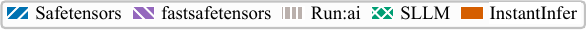}
    \\
    \begin{subfigure}[t]{0.48\textwidth}
        \centering
        \includegraphics[width=\linewidth]{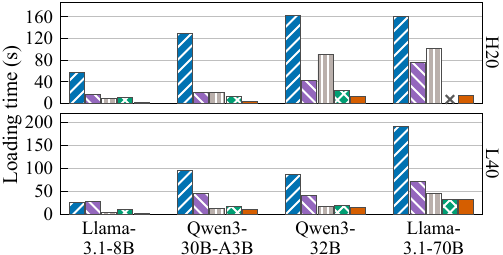}
        \caption{Load from storage.}
    \end{subfigure}
    \hfill
    \begin{subfigure}[t]{0.48\textwidth}
        \centering
        \includegraphics[width=\linewidth]{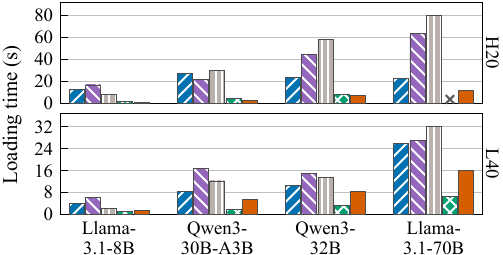}
        \caption{Load from memory.}
    \end{subfigure}
    \caption{[H20 and L40] Model loading time comparison across disk and memory.}
    \label{fig:eval-loading}
\end{figure*}

\subsection{Concurrent Cold-Start Scalability}
\label{ssec:eval-scalability}

When a severe request burst overwhelms a single instance, the serving platform must scale out by concurrently launching multiple instances to maintain quality of service (QoS). We evaluate this scalability using Qwen3-30B-A3B, whose relatively small memory footprint allows the most concurrent instances to be created on a single node, measuring the concurrent startup time for 1, 2, 4, and 8 instances. When multiple instances are launched concurrently, loading the model once into host memory and then copying it to each GPU is substantially more efficient than issuing repeated storage reads for each instance. We therefore enable in-memory model caching, as all evaluated frameworks support this mechanism: ServerlessLLM loads the model into its custom caching layer, vLLM and SGLang load it into the page cache, and \sysname loads it into a tmpfs-based cache.

As shown in \cref{fig:eval-multi-instance}, the relative performance ranking among all frameworks remains consistent with the single-instance scenario, and \sysname achieves the lowest startup time at all concurrency levels. For a single instance, \sysname achieves a 2.6$\times$--4.3$\times$ speedup over the baselines. As the number of concurrent instances scales to 8, \sysname maintains superior efficiency, achieving a 1.6$\times$--2.3$\times$ speedup over the baselines. 
Latency grows with the number of instances, mainly during the process-tree materialization phase, where package imports and CUDA context setup account for most of the increase because concurrent processes contend for I/O, memory allocation, and other system and hardware resources.

\subsection{Model Switching Latency}
\label{ssec:eval-switching}

Dynamic multi-model serving demands low-latency model switching. \cref{fig:eval-switching} quantifies the service stall time---the duration during which no requests can be served---during a model switch. We evaluate switching between models of similar sizes and with the same parallel configurations: Qwen3-32B $\to$ Qwen3-30B-A3B, Qwen2.5-72B $\to$ Llama-3.1-70B, DeepSeek-V2.5 $\to$ Qwen3-235B-A22B, and DeepSeek-V3 $\to$ DeepSeek-R1. Each group is labeled by the target model in the figure.
For the baselines, this stall encompasses sequentially shutting down the outgoing model and cold-starting the incoming one. \sysname reduces this disruption by overlapping the incoming model's initialization with the outgoing model's active serving and shutdown, and by accelerating the cold-start path. It reduces service stall time and achieves a 3.9$\times$--5.0$\times$ speedup over ServerlessLLM and a 4.6$\times$--11.8$\times$ speedup over vLLM and SGLang.

\subsection{Ablation Study}
\label{ssec:eval-ablation}

\noindent
\textbf{Latency reduction by design component.} We isolate the performance gains attributed to each core \sysname mechanism in \cref{fig:eval-ablation}. The \textit{tensor materialization speedup} yields the most substantial gain, achieving a 2.4$\times$--3.7$\times$ speedup over the baseline startup time, as model loading dominates the overall startup latency. Building on this foundation, the speedup of process-tree materialization delivers an additional 1.2$\times$--1.5$\times$ speedup over the optimized baseline. Finally, \textit{multi-model overlapping} removes implicit barriers, contributing a further 1.6$\times$--2.1$\times$ speedup. Altogether, these compounding mechanisms achieve a 5.0$\times$--7.8$\times$ overall speedup compared to the unoptimized baseline.

\begin{figure*}
    \begin{minipage}[t]{0.32\textwidth}
        \centering
        \includegraphics[width=\linewidth]{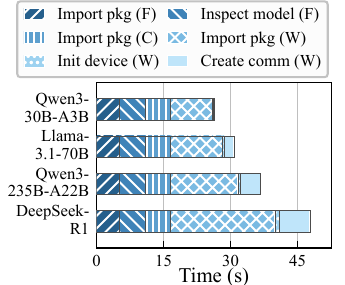}
        \caption{[H20] Breakdown of \underline Frontend/\underline Core/\underline Worker process creation.}
        \label{fig:eval-process-init-breakdown}
    \end{minipage}
    \hfill
    \begin{minipage}[t]{0.268\textwidth}
        \centering
        \includegraphics[width=\linewidth]{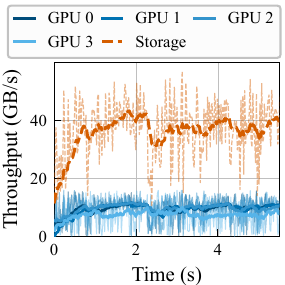}
        \caption{[H20, Qwen3-235B-A22B] GPU/storage throughput.}
        \label{fig:eval-hw-bandwidth}
    \end{minipage}
    \hfill
    \begin{minipage}[t]{0.375\textwidth}
        \centering
        \includegraphics[width=\linewidth]{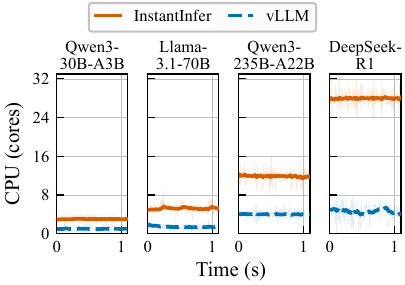}
        \caption{[H20] CPU overhead in model loading.}
        \label{fig:eval-cpu}
    \end{minipage}
\end{figure*}

\noindent
\textbf{Latency reduction of model loading.} \cref{fig:eval-loading} specifically isolates the tensor materialization stage. When loading from storage with 50~GB/s bandwidth on the H20 testbed, \sysname significantly outperforms all alternatives: \sysname achieves a 10.4$\times$--32.3$\times$ speedup over the default Safetensors library, 3.3$\times$--9.0$\times$ over fastsafetensors, 4.8$\times$--7.0$\times$ over Run:ai, and 1.8$\times$--6.5$\times$ over ServerlessLLM. This performance stems from the highly overlapped chunk I/O pipeline constructed through CFA, which fully saturates hardware bandwidth.

When loading from memory on the H20 testbed, \sysname remains highly efficient, achieving a 1.9$\times$--10.1$\times$ speedup over Safetensors and a 1.1$\times$--1.4$\times$ speedup over ServerlessLLM. Compared to loading from storage, Safetensors benefits the most from in-memory loading (4.5$\times$--7.1$\times$ speedup) due to its mmap-based zero-copy pipeline. In contrast, fastsafetensors and Run:ai gain little (1.1$\times$ and 1.2$\times$ on average, respectively), as they are optimized for block storage I/O and incur extra copies and memory allocations that negate the memory bandwidth advantage. ServerlessLLM's loading pipeline also adapts well, achieving a 2.9$\times$--6.3$\times$ speedup through its direct cudaMemcpy-based pipeline. \sysname sees a more modest gain (1.4$\times$--1.8$\times$) because it already gains the highest storage bandwidth when loading from storage, leaving less headroom for in-memory loading to exploit.

\noindent
\textbf{Process-tree materialization time breakdown.} To understand structural bottlenecks, \cref{fig:eval-process-init-breakdown} breaks down the most time-consuming initialization steps. Operations like importing packages, initializing communication groups, and inspecting the model information dominate the critical path. The time for workers to import packages and create communication groups increases with the number of worker processes (which equals the GPU count): the former due to contention for memory and I/O resources among workers, and the latter because more connections and communication buffers need to be created. By using the CFA, \sysname parallelizes these highly synchronous operations, thereby reducing the latency of the critical path.

On the L40 testbed, \sysname achieves 5.7$\times$--8.5$\times$ speedup over Safetensors, 2.2$\times$--8.4$\times$ over fastsafetensors, 1.1$\times$--1.3$\times$ over Run:ai, and 1.0$\times$--3.1$\times$ over ServerlessLLM. This confirms that \sysname adapts robustly to PCIe-bound environments. The narrower gap between \sysname and Run:ai in this setting is attributable to the fact that both systems require inter-GPU communication to forward a subset of weight shards, and the limited PCIe bandwidth becomes the bottleneck for such transfers.
ServerlessLLM does not require such communication, but its transfer overlap is less efficient than \sysname's, thus it performs slightly worse.

When loading from memory on the L40 testbed, \sysname achieves 1.3$\times$--2.9$\times$, 1.7$\times$--4.6$\times$, and 1.6$\times$--2.3$\times$ speedups over Safetensors, fastsafetensors, and Run:ai, respectively. 
ServerlessLLM outperforms \sysname in this setting because its loading pipeline does not require GPU peer-to-peer communication. \sysname, however, still retains two practical advantages: it is compatible with existing model formats and requires less storage space because ServerlessLLM needs multiple copies of weights for different parallelism configurations; moreover, the model-loading logic of \sysname is decoupled from the engine itself and can therefore be integrated into other inference engines, such as SGLang, with little engine-specific modification.

\subsection{Hardware Utilization and Overhead}
\label{ssec:eval-hardware}

We focus on hardware bandwidth utilization to understand why \sysname achieves high loading performance, and on CPU and memory overhead to evaluate whether this design remains friendly to resource-constrained platforms.

\noindent
\textbf{Hardware bandwidth utilization.} 
Model loading dominates engine startup time---e.g., starting a vLLM engine for Qwen3-235B-A22B takes 236s, of which 162s (68.6\%) is spent on loading---so hardware I/O bandwidth utilization directly governs overall cold-start performance. As depicted in \cref{fig:eval-hw-bandwidth}, during the initialization of Qwen3-235B-A22B, \sysname achieves an average storage goodput of nearly 80\% of the theoretical link speed, effectively converting model materialization into a pure hardware-bound transfer with little room for further improvement without faster hardware.

\noindent
\textbf{CPU Overhead.} 
As shown in \cref{fig:eval-cpu}, \sysname utilizes approximately 2.5--3.5 CPU cores per GPU during the intensive model loading stage, compared to roughly 1 core per GPU for vLLM. 
This overhead stems from \sysname's CFA-driven multi-stage I/O pipeline, whose overlapped transfers require additional CPU resources for coordination and progress management.
Given that modern inference servers typically feature hundreds of logical cores (e.g., 192 cores in our testbed) that remain largely idle during engine startup, this modest CPU usage is a highly cost-effective trade-off to aggressively drive high-throughput I/O. Furthermore, on platforms with lower-end CPUs, the underlying storage bandwidth is typically also lower; in such scenarios, \sysname naturally scales down its CPU footprint, requiring fewer cores to fully saturate the available hardware.

\noindent
\textbf{Memory Footprint.} \sysname allocates temporary I/O buffers on both CPU and GPU, sized according to tensor dimensions, storage types, and GPU count to maximize throughput without triggering out-of-memory errors. For instance, loading Qwen3-30B-A3B on a single H20 GPU requires a 4~GB CPU buffer and a 4~GB GPU buffer. For DeepSeek-R1 across eight H20 GPUs, it uses a 4~GB CPU buffer per node and a 512~MB GPU buffer per GPU. Importantly, all allocated buffers are immediately freed once loading completes, ensuring they do not consume space needed for the Key-Value (KV) cache and consequently impose zero memory or performance overhead during steady-state inference.

\section{Related Work}

\noindent
\textbf{Cold-start reduction.}
Recent surveys summarize the rapidly evolving LLM serving landscape
from algorithms to systems~\cite{li2024llminferenceservingsurvey,Miao_2025}.
Within this space, recent systems reduce cold starts through faster
loading, checkpoint locality, pipelining, caching, or state reuse,
including ServerlessLLM~\cite{fu2024serverlessllm},
HydraServe~\cite{lou2025hydraserveminimizingcoldstart},
$\lambda$Scale~\cite{yu2025lambda},
PipeBoost~\cite{liu2025pipeboost},
BlitzScale~\cite{zhang2025blitzscale},
Tangram~\cite{zhu2025tangram},
Medusa~\cite{zeng2025medusa}, and
Foundry~\cite{liu2026foundrytemplatebasedcudagraph},
as well as loading-oriented systems such as InstaInfer,
fastsafetensors, and Run:ai Model Streamer~\cite{sui2024pre, yoshimura2025speeding, runai-model-streamer}.
In contrast, \sysname provides a unifying CFA abstraction that
restructures dependencies across startup, loading, and switching.

\noindent
\textbf{Multi-model serving and switching.}
Multi-model serving and fast switching are studied in
Prism~\cite{yu2025prism},
MuxServe~\cite{duan2024muxserve},
Aegaeon~\cite{xiang2025aegaeon},
Torpor~\cite{yu2025torporgpuenabledserverlesscomputing},
Tangram~\cite{zhu2025tangram}, and
WarmServe~\cite{lou2025warmserveenablingoneformanygpu}, as well as adapter-serving systems
~\cite{chen2024punica, sheng2023s, li2024caraserve, li2025toppings, iliakopoulou2025chameleon, sui2025serverlesslora, ni2025predictive}.
These works optimize sharing, placement, or switching policies,
whereas \sysname models switching as overlapping state progress under
explicit resource constraints.

\noindent
\textbf{Elasticity and scale-out.}
Fast elastic serving depends on burst handling, resource
fragmentation, and cross-cluster scheduling.
Recent systems address these issues with network-assisted scale-out,
cooperative execution, heterogeneous provisioning, and geo-distributed
placement~\cite{yu2025lambda, liu2025pipeboost, zhang2025blitzscale, hu2025deepserve, lv2025dilu, chen2024harmonybatch, mao2025skyserve}.
The mechanisms are complementary to \sysname: by shortening startup
and switching on the critical path, \sysname can improve the
effectiveness of higher-level elastic policies.

\section{Correctness of Program Refactoring}\label{sec:correctness}

We make the following assumption in the correctness proof.
If an FA is executing a transition from state $s_i$ to $s_{i+1}$, it will successfully reach state $s_{i+1}$ within a finite amount of time.

\setcounter{definition}{0}
\setcounter{theorem}{0}

\begin{definition}[DAG CFA]
If all internal state transitions and cross-component state dependencies of a CFA form a DAG, we define it as a DAG CFA.
\end{definition}

\begin{definition}[Chain CFA]
Traditional monolithic sequential component execution, whose execution order is strictly governed by the program's control flow, can be abstracted into a special DAG CFA, termed a Chain CFA. 
\end{definition}

The construction is as follows: If the program's control flow dictates that component $c_i$ must execute strictly before component $c_j$, we explicitly construct a directed dependency edge from the final state of $c_i$ to the initial state of $c_j$. Because such a strict linear control flow will never introduce any cyclic dependencies, the dependency structure of the constructed Chain CFA is inherently acyclic, making it a DAG CFA as well.

\begin{theorem}\label{thm:termination-2}
A DAG CFA is guaranteed to terminate.
\end{theorem}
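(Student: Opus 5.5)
The plan is to argue by induction along a topological order of the DAG formed by all nodes $(c,s)$, with $c$ a component and $s \in \mathcal{S}_c$. The edges are of two kinds: internal transitions $(c,s_{c,k}) \to (c,s_{c,k+1})$, and declared cross-component dependencies $(c_i,s_i) \to (c_j,s_j)$. Since the graph is a DAG and finite (finitely many components, each with a finite state space $\mathcal{S}_c$), it admits a topological ordering $v_1, v_2, \dots, v_N$ of its nodes.

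The claim to establish by strong induction on $k$ is: every node $v_k = (c, s_{c,m})$ is reached in finite time. Here ``reached'' means that $S(c) \ge s_{c,m}$ has been published to the channel.

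\textbf{Base case.} For an initial state $s_{c,0}$ there is nothing to show, since every component starts there.

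\textbf{Inductive step.} Take a non-initial node $(c,s_{c,m+1})$. It is entered only through the transition from $s_{c,m}$, and that transition is gated by a finite set of prerequisites. One prerequisite is $(c,s_{c,m})$ itself. The others are the sources $(c',s')$ of every dependency edge $(c',s') \to (c,s_{c,m})$. All of these precede $(c,s_{c,m+1})$ in the topological order, so by the induction hypothesis each is reached within some finite time. Let $t^\ast$ be the maximum of these finitely many times.

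I then need to check, from the channel semantics in \cref{fig:channel-event-loop}, that a \texttt{wait\_state} on an already-published key returns. There are two cases:
\begin{itemize}
\item If the wait was issued before the publication, it sits in the waiter list. The later \texttt{set\_state} wakes it (lines 6--8).
\item If the wait was issued afterwards, it is woken immediately (line 16).
\end{itemize}
Hence by time $t^\ast$, component $c$ is unblocked and begins the transition. By the standing assumption that an executing transition completes in finite time, $(c,s_{c,m+1})$ is reached at a finite time.

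\textbf{Conclusion.} After finitely many induction steps, every node, and in particular every terminal state $s_{c,m_c}$, is reached in finite time, so the CFA terminates. Monotonicity and irreversibility of $S(c)$ ensure that once a state is published it remains satisfied. Therefore no prerequisite can become ``unreached'' again, and no livelock arises.

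\textbf{Main obstacle.} The hard part is justifying that the abstract DAG faithfully captures every blocking point of the real program. The argument needs every wait of a component to correspond to a declared edge into one of its nodes, so that no hidden wait cycle exists outside the graph. It also needs the channel's handling of the waiter list to have no lost-wakeup race, i.e.\ the \texttt{set\_state} and \texttt{wait\_state} events must be serialized through the single event loop. I would state both as modeling facts: the single-threaded daemon processes events atomically in arrival order, so every wait is matched against the publication status at the moment it is processed. The rest of the proof is then the routine well-founded induction above.
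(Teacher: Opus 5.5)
Your proof is correct and is essentially the paper's argument in a different guise. The paper repeatedly picks a state of in-degree zero, waits the finite time for its transition to complete, and deletes its outgoing edges until the finitely many edges are gone, whereas you run the same well-founded induction along a fixed topological order; your per-node version is slightly sharper, since it shows directly that every terminal state is reached (the paper gets this only in part (1) of the proof of Theorem~\ref{thm:equivalence-2}), and by counting a dependency as satisfied once its source state is \emph{reached} rather than \emph{left} it avoids the paper's awkward case of a terminal state with outgoing edges, such as the final-to-initial edges of a Chain CFA.
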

\begin{proof}
(1) If there are incomplete transitions in the DAG, there must exist at least one node (state) $s$ with an in-degree of $0$ and an out-degree greater than $0$ (denoted as ``in=0$\wedge$out>0''). An in-degree of $0$ indicates that all prerequisites for this state are fully satisfied. According to the execution semantics, the FA to which this state belongs is actively executing its transition to the next state.

(2) By the assumption, after a finite amount of time, the FA will successfully execute to the next state. Once this transition is finished, we remove all directed edges originating from $s$ (including both internal FA state transitions and cross-FA dependencies) from the DAG.

(3) The resulting graph after removing the node's outgoing edges remains a DAG. Therefore, we can repeatedly execute steps 1 and 2. Because the total number of edges in the system is finite, eventually, there will be no nodes left with in=0$\wedge$out>0. Since the DAG is acyclic, all remaining nodes  have both an in-degree and out-degree of $0$, meaning there are no executable steps (state transitions) left. Thus, the system successfully converges and terminates.

\end{proof}

\begin{theorem} \label{thm:equivalence-2}
If the internal state transition sequence of each FA in a DAG CFA is identical to that in a Chain CFA, then when both CFAs terminate, the final states of all FAs will be exactly the same.
\end{theorem}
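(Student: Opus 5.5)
The plan is to reduce the statement to a property of each FA taken separately. The key fact is monotonicity: within any FA, the only possible transitions are the ones in its fixed internal sequence $s_{c,0}\to s_{c,1}\to\dots\to s_{c,m_c}$. The final state is therefore determined once we know that the FA has performed every transition in that sequence. So the proof will show that, in both the DAG CFA and the Chain CFA, every FA $c$ terminates in its last state $s_{c,m_c}$. The hypothesis that the two CFAs share the same internal transition sequence then forces the final states to agree.

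\textbf{Step 1: both systems terminate.} By Theorem~\ref{thm:termination-2}, the DAG CFA terminates. The construction given after the definition of a Chain CFA shows that a Chain CFA is itself a DAG CFA, so Theorem~\ref{thm:termination-2} applies to it too. Hence both systems reach a configuration with no executable transitions.

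\textbf{Step 2: at termination every FA is in its last state.} I will argue by contradiction, reusing the edge-removal argument from the proof of Theorem~\ref{thm:termination-2}. Suppose some FA $c$ stops in a state $s_{c,k}$ with $k<m_c$. Then the internal edge $s_{c,k}\to s_{c,k+1}$ has not been removed, so the graph still contains a node with positive out-degree. Because the graph is acyclic, following in-edges backwards from that node must end at some node with in-degree $0$ and positive out-degree. By part (1) of that proof, the FA owning such a node is still actively executing a transition. This contradicts termination. The conclusion is that at termination every internal transition edge has been consumed, so every FA sits at $s_{c,m_c}$.

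\textbf{Step 3: the sequences pin down the final state.} Monotonicity and irreversibility mean that the state an FA occupies after performing its whole sequence is the last element of that sequence. That element is the same in both CFAs by hypothesis. It follows that $S(c)$ is identical in the two systems for every $c$, even though the interleavings of transitions across FAs may differ.

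\textbf{Main obstacle.} The hardest step is Step 2, specifically ruling out the possibility that an FA ends permanently blocked in an intermediate state. The argument must make precise that a state whose prerequisites are all published really does have in-degree $0$ after edge removal. That is, removing the out-edges of a completed state must also discharge the dependency edges that point into waiting states. I would handle this by showing that the removal process preserves the invariant ``the remaining edges are exactly the unfulfilled transitions and dependencies.'' A second point needs care: any data a transition reads must be governed only by the declared dependencies, so that the code block of each transition yields the same resulting state regardless of interleaving. I would state this explicitly as part of what ``identical internal transition sequence'' means. This is consistent with the per-case data-safety analysis the paper defers to Section~\ref{sec:cfa-for-llm}.
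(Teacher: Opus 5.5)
Your proposal is correct and follows the paper's proof essentially step for step. The paper also argues by contradiction that at termination every FA sits at its last state: an unconsumed edge would force a node with in-degree $0$ and positive out-degree, i.e., an executable transition. It then uses the identical per-FA transition sequences to conclude that the final states coincide. Your Step~1 is redundant, since the statement already assumes both CFAs terminate. Your backward traversal through the acyclic graph and your invariant that dependency edges are discharged once their source state is reached make explicit what the paper leaves implicit. This matters because the paper's literal rule removes the edges out of $s$ only after the FA leaves $s$, which never happens for dependencies whose source is a final state, such as $(\texttt{parent},\texttt{Ready})$.
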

\begin{proof}
(1) We first prove that when any DAG CFA (including a Chain CFA) terminates, every FA must be exactly at its final defined state. Suppose there is an FA that is not at its final state; then its current state must have a path to advance, meaning its out-degree in the DAG is not $0$. This implies that uneliminated directed edges still exist in the DAG, meaning the DAG must still contain a node with in=0$\wedge$out>0, allowing execution to continue. This directly contradicts the premise that the system has reached a terminated state.

(2) In both CFAs, the partial order of states for each FA is identical. With (1), the final states of all FAs in both the DAG CFA and the Chain CFA are guaranteed to be identical when they terminate.

\end{proof}

\section{Conclusion}

We propose a CFA abstraction and framework to refactor the LLM cold-start program. CFA safely enables the concurrent execution of heterogeneous components and the merging of fine-grained data operations. The framework preserves sequential program structure.
We refine process-tree materialization, tensor loading, and model switching in vLLM cold start, forming the system \sysname. Experiments demonstrate that \sysname effectively accelerates LLM cold start and is robust across diverse GPUs, workloads, and scales.

\newpage\clearpage

\bibliographystyle{ACM-Reference-Format}
\bibliography{ref}

\newpage\clearpage

\appendix
\crefalias{section}{appendix}
\crefalias{subsection}{appendix}
\crefalias{subsubsection}{appendix}

\begin{figure}[t]
    \centering

\begin{lstlisting}[language=Python,xleftmargin=2.5em]
async def coro_A(channel): # Component A
  await A_init1()
  channel.set_state("A", "A1")
  await A_init2()
  channel.set_state("A", "A2")

async def coro_B(channel): # Component B
  await B_init1()
  channel.set_state("B", "B1")
  await channel.wait_state_async("A", "A1")
  await B_init2()
  channel.set_state("B", "B2")
\end{lstlisting}
    \caption{Coroutine Example of CFA Programming}
    \label{app:fig:code_examples}
\end{figure}

\section{Composability}\label{ssec:composability}

\sysname's CFA abstraction is composable with complementary
optimizations that target other components of cold-start latency.
For example, \sysname does not currently optimize the \emph{Optimize} stage discussed in \cref{ssec:analysis} (e.g., compilation and CUDA graph capture). Medusa~\cite{zeng2025medusa} addresses precisely this stage by serializing captured CUDA graphs and restoring them on subsequent cold starts, thereby eliminating the expensive graph capture overhead. Such orthogonal techniques can be directly incorporated into \sysname to further reduce end-to-end cold-start latency.

Similarly, HydraServe~\cite{lou2025hydraserveminimizingcoldstart} distributes model weights across multiple instances in a pipeline-parallel fashion so that serving can begin before any single instance holds the full model, and then gradually transitions to data parallelism by loading the complete weights on each instance to improve throughput. This mechanism can also be incorporated into \sysname, and the higher loading bandwidth provided by \sysname can further shorten the service startup time of pipeline parallelism.

\section{Coroutine Example of CFA Programming}\label{sec:coroutine}

Figure~\ref{app:fig:code_examples} shows a coroutine example of CFA programming, whose logic is similar to that of process- and thread-based execution.

\section{Additional Burst Cold-Start Results}
\label{sec:burst}

\noindent
\textbf{Burst cold-start TTFT.} Beyond the burst size of 32 shown in the main text, \sysname also maintains clear advantages at burst sizes of 16 (\cref{fig:eval-burst-16}) and 64 (\cref{fig:eval-burst-64}). At a burst size of 16, \sysname achieves a 3.0$\times$--3.6$\times$ speedup over vLLM, a 4.1$\times$--5.3$\times$ speedup over SGLang, and a 2.2$\times$--2.7$\times$ speedup over ServerlessLLM. At a burst size of 64, \sysname achieves a 2.0$\times$--2.9$\times$ speedup over vLLM, a 2.8$\times$--4.3$\times$ speedup over SGLang, and a 1.5$\times$--2.2$\times$ speedup over ServerlessLLM. As burst size increases, the relative gain decreases somewhat because a larger fraction of TTFT is spent after engine startup, but \sysname still substantially shortens the cold-start critical path.

\begin{figure}[t]
  \centering
  \includegraphics[width=0.9\linewidth]{fig/ttft_cdf_legend.pdf}
  \\
  \includegraphics[width=0.75\linewidth]{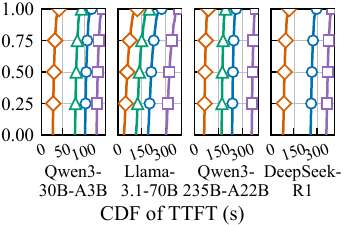}
  \caption{[H20] Cold-start TTFT under a request burst (concurrency=16). ServerlessLLM does not support DeepSeek-R1 because it is integrated with an older engine.}
  \label{fig:eval-burst-16}
\end{figure}

\begin{figure}[t]
  \centering
  \includegraphics[width=0.9\linewidth]{fig/ttft_cdf_legend.pdf}
  \\
  \includegraphics[width=0.75\linewidth]{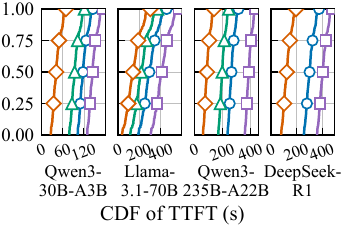}
  \caption{[H20] Cold-start TTFT under a request burst (concurrency=64). ServerlessLLM does not support DeepSeek-R1 because it is integrated with an older engine.}
  \label{fig:eval-burst-64}
\end{figure}

\end{document}